\newcommand\eat[1]{}
\title{Path coalitional games}
\author{%
	Haris Aziz\inst{1} \and
	Troels Bjerre S{\o}rensen\inst{2}}
\institute{%
	Institut f\"ur Informatik,
	Technische Universit\"at M\"unchen, %\\
	80538 M\"unchen, Germany \\
	\email{aziz@in.tum.de}
	\and 
	Computer Science Dept., University of Warwick, %\\
	Coventry, UK, CV4 7AL \\
	\email{trold@dcs.warwick.ac.uk}}
\begin{document}

\maketitle

\begin{abstract}
		We present a general framework to model strategic aspects and stable and fair resource allocations in networks via variants and generalizations of \emph{path coalitional games}. In these games, a coalition of edges or vertices is successful if it can enable an $s$-$t$ path. 
		We present polynomial-time algorithms to compute and verify least core payoffs of \emph{cost-based generalizations} of path coalitional games and their duals, thereby settling a number of open problems. 
		The least core payoffs of path coalitional games are completely characterized and a polynomial-time algorithm for computing the nucleolus of edge path coalitional games on undirected series-parallel graphs is presented.
		
		%The paper also presents an insight that the least core payoffs can be computed for the cost-based generalization of a simple game, as long as the same can be done for the underlying simple game under very mild conditions.
		
%	 and some extremely mild conditions are satisfied.
%One of our contributions is the following insight: \emph{under very weak conditions, linear programming techniques to compute least core payoffs of the underlying simple game can be used to compute the least core payoffs of cost-based generalization of the simple game.}
\end{abstract}

\section{Introduction}

We consider simple coalitional games called \emph{path coalitional games}, in particular \emph{Edge Path Coalitional Games (EPCGs)} and \emph{Vertex Path Coalitional Games (VPCGs)}. 
In these games, the players control the edges and the vertices, respectively, and a coalition of players wins if it enables a path from the source $s$ to the sink $t$ and loses otherwise. Both of these coalitional games are natural representations, for which solution concepts such as the Shapley value or the nucleolus represent the amount of payoff the respective edges or vertices deserve for enabling a path from $s$ to $t$. The payoff can indicate the importance of the players or the proportional resource, profit, maintenance or security allocation required at the respective nodes and vertices. This kind of stability analysis is especially crucial if the underlying graph represents a logistics, communication, military, supply-chain or information network~\citep{BP10a,Nebel10a}. %Cooperative intrusion detection systems have also been highlighted as critical in fighting against phishing and spam~\citep{NVC10a}. 
We study the computational complexity of computing important cooperative game theoretic solutions of path coalitional games.

Path coalitional games also have a natural correspondence with two-person zero-sum noncooperative games. In such games, which we term as \emph{path intercept games}, there are two players, the \emph{interceptor} and the \emph{passer}. The problem is to maximize the probability of intercepting a strategically chosen path in an undirected graph. We refer to the path intercept games as \emph{Edge Path Noncooperative Games (EPNGs)} and \emph{Vertex Path Noncooperative Games (VPNGs)}. The pure strategies of the interceptor are the edges $E$ (or vertices $V$) and the set of pure strategies of the \emph{passer} is the set $\mathcal{P}$ which contains all paths from vertex $s$ to vertex $t$. If the edge (or vertex) used by the interceptor intersects with the chosen path, then the interceptor wins and gets payoff $1$. Otherwise, the interceptor loses and gets payoff $0$. Thus, the {\em value} of the game is the greatest probability that the interceptor can guarantee for successfully intercepting the chosen path.

The area of algorithmic cooperative game theory is beset with negative computational results~\citep[see e.g., ]{DeFa08a,EGGW09a,KBBEGRZ08a}. 
In this paper, we present positive algorithmic results for \emph{cost-based generalizations} of path coalitional games and their duals. The cost-based generalization of a simple game is a rich and widely-applicable model. For example in the case of edge path coalitional games, each edge charges a certain cost for its services being utilized. A coalition of edges gets a fixed reward for enabling an $s$-$t$ path. It is then natural to examine payoffs which are fair and stable and also manage to transport goods from $s$ to $t$~\citep{FGM00a}. The cost-based generalizations of path coalitional games have significance in logistics, planning and operations research. Similarly, the cost-based generalizations of duals of path coalitional games have natural importance in proposing stable reward schemes to protect strategic assets or blocking intruders in a network.

\paragraph{Contribution:}

\begin{itemize}
	\item We use dualization and cost-based generalization to provide a unifying way to model $s$-$t$ connectivity. In doing so, we also identify some interesting connections between coalitional game theory and network interdiction. 
	%The framework also covers some previously studied cooperative and non-cooperative games.
	\item For the cost-based generalization of path coalitional games and their duals we present the first polynomial-time algorithms to compute and verify least core payoffs. Interestingly, the problem of computing the least core of the dual of vertex path coalitional game was (wrongly) claimed to be NP-hard~\citep{BP10a} and the problem of computing the least core of cost-based generalization of edge path coalitional game was conjectured to be NP-hard
~\citep[page 65, ]{Nebel10a}. 
\item We present an  algorithmic technique to compute least core payoffs for cost-based generalizations of simple games in any representation. As a corollary, it is shown that there exist polynomial-time algorithms to compute the least core payoffs of cost-based generalizations of \emph{spanning connectivity games} and \emph{weighted voting games} with bounded weights and costs.
%	\item We also present a general insight that linear programming techniques to compute a least core payoff of the underlying simple game can be used to compute the least core payoffs of the cost-based generalization of the simple game. As a corollary, we show that there exist polynomial-time algorithms to compute the least core payoffs of cost-based generalizations of \emph{spanning connectivity games} and \emph{weighted voting games} with bounded weights and costs.
	\item The least core payoffs of simple path coalitional games are characterized and purely combinatorial polynomial-time algorithms to compute a least core payoff are presented.
	\item The nucleolus is a solution concept which is notoriously hard to compute for most interesting coalitional games. A polynomial-time algorithm to compute the nucleolus of edge path coalitional games for undirected series-parallel graphs is presented. 
	
	%\item Finally we present algorithms to compute the Banzhaf index for directed acyclic graphs and undirected series-parallel graphs for edged path coalitional games. For general graphs, we settle the complexity of computing Banzhaf and Shapley values for all the classes of path coalitional games.
	
\end{itemize}

\section{Related work}

Network interdiction is the general framework in which weakening of a network by an adversary or fortification of a network by defenders is considered~\cite{SmithLim08a}. Within this body of literature, shortest path interdiction (in which an adversary wants to maximize the length of the $s$-$t$ shortest path in a directed network) is related to our setting of $s$-$t$ path coalitional games. Whereas all the variants of shortest path interdiction problem are NP-hard~\citep{KBBEGRZ08a}, we present positive computational results. 

While \emph{Vertex Path Noncooperative Games (VPNGs)} have not been considered in the interdiction literature (to the best of our knowledge), \emph{Edge Path Noncooperative Games (EPNGs)} is equivalent to the two player zero sum games considered in \citep{WashWood95a}.  \citet{WashWood95a} studied maxmin strategies in EPNGs. Our cooperative game formulation helps us in proposing equilibrium refinements such as the nucleolus (which corresponds to a unique refinement of the maxmin strategy of the corresponding path intercept games) and other cooperative-game solution concepts such as the Shapley value. 
The coalitional model, especially the cost-based generalizations, helps us reason about more elaborate security settings in which incentives, money, and cooperation of agents is involved.
The comparison between a simple coalitional game and its natural noncooperative version is similar in spirit to \citep{ALPS09a} where spanning trees are considered. 
%There is some work exploring the link between cooperative games on networks and network security~\citep[see \eg][]{NVC10a}. However, this line of work has not considered algorithmic questions which we discuss.

The general definition of cost-based generalization of a simple game is inspired by \citet{FGM00a} and \citet{BP10a} where cost-based versions of specific graph-based simple games are considered. 
 %Contrary to a specific model, it can be applied to any representation of a simple game. 
\citet{FGM00a} examined conditions for the non-emptiness of the core of \emph{`shortest path games'} (which are equivalent to cost-based generalization of EPCGs). However, computational problems such as computing a least core payoff were not considered.  
Different variants of EPCGs were considered under the umbrella of `shortest path games' in \citep{Nebel10a} but either the complexity of core-based relaxations is not examined or the complexity of computing least core solutions was left open and in fact conjectured to be computationally hard~\citep[page 65, ]{Nebel10a}. 
Similarly, in \citep{BP10a}, it is claimed that the least core of the dual of VPCGs is NP-hard to compute. We disprove the claim in \citep{BP10a} and present a polynomial-time algorithm to solve a generalization of the same problem for this game as well as three other games on \emph{any graph}.

A variant of EPCGs was also considered in \citep{NiRo01a} but the focus was on strategy-proof mechanisms rather than stability issues. 
\citet{DIN99a} consider a different type of $s$-$t$ connectivity game which is balanced.

The $s$-$t$ path connectivity setting also has natural links with network reliability where the goal is to compute the probability that there exists a connected path. 
However the network reliability literature does not consider strategic settings and certainly has no equivalent concepts such as the least core and the nucleolus etc.

\section{Preliminaries}

In this section, we first define the path coalitional games and path intercept games and then consider suitable game-theoretic solution concepts for these games. 

\subsection{Games}

We begin with the formal definition of a \emph{coalitional game}.

\begin{definition}[Coalitional games]	
A \emph{coalitional game} is a pair $(N,v)$ where $N=\{1,\ldots, n\}$ is a set of players and $v:2^N \rightarrow \mathbb{R}_+$ is a \emph{characteristic or valuation function} that associates with each coalition $S\subseteq N$ a payoff $v(S)$ where $v(\varnothing)=0$.%
\footnote{Throughout the paper, we assume $0\in\mathbb R_+$. }
A coalitional game~$(N,v)$ is \emph{monotonic} when it satisfies the property that $v(S)\leq v(T)$ if $S \subseteq T$.
\end{definition}
Throughout the paper, when we refer to a coalitional game, we assume such a coalitional game with transferable utility. For the sake of brevity, we will sometimes refer to the game $(N,v)$ as simply~$v$.

\begin{definition}[Simple game]	
A \emph{simple game} is a monotonic coalitional game $(N,v)$ with $v:2^N \rightarrow \{0,1\}$ such that $v(\varnothing)=0$ and $v(N)=1$. A coalition $S \subseteq N$ is \emph{winning} if $v(S)=1$ and \emph{losing} if $v(S)=0$. A \emph{minimal winning coalition} of a simple game $v$ is a winning coalition in which defection of any player makes the coalition losing.
%The set of minimal winning coalitions of a simple game $v$ can be denoted by ${W^m}$. 
%A simple game can be represented by $(N,W^m)$, where $W^m$ is the set of minimal winning coalitions.
\end{definition}

We now define the following two \emph{path coalitional games}.

\begin{definition}[Path coalitional games]	
	For an unweighted directed/undirected graph, $G=(V\cup\{s,t\},E)$,
	\begin{itemize}
		\item the corresponding \emph{Edge Path Coalitional Game (EPCG)} is a simple coalitional game $(N,v)$ such that $N=E$ and for a $S\subseteq N$, $v(S)=1$ if and only if $S$ admits an $s$-$t$ path.
		\item the corresponding \emph{Vertex Path Coalitional Game (VPCG)} is a simple coalitional game $(N,v)$ such that $N=V$ and for a $S\subseteq N$, $v(S)=1$ if and only if $S$ admits an $s$-$t$ path.
	\end{itemize}	

\end{definition}

\begin{definition}[Dual of a game]
For a game $G=(N,v)$, the corresponding \emph{dual} game $G^D=(N,v^D)$ can be defined in the following way:
$v^D(S)=v(N)-v(N\setminus S)$ for all $S\subseteq N$. 
\end{definition}

For both EPCG and VPCG, the corresponding duals $EPCG^D$ and $VPCG^D$ can be defined. It will be seen that $VPCG^D$ is equivalent to a well-studied coalitional game.

For a simple game, we can define a game which is the \emph{cost-based generalization}.

\begin{definition}[Cost-based generalization]
For a given simple game $G=(N,v)$ we can define a \emph{cost-based generalization} $C$-$G=(N,v^c)$ based on \emph{cost vector} $c=(c_1,\ldots, c_{|N|})\in {\mathbb{R}_{+}}^{|N|}$ and \emph{reward} $r\in \mathbb{R}_{+}$. 
For a coalition $S\subseteq N$, the value of the $v^c(S)=r-\min_{S'\subseteq S, v(S')=1}(\sum_{i\in S'}c_i)$ if $v(S)=1$ and $v^c(S)=0$ if $v(S)=0$.
\end{definition}

The intuition of a cost-based generalization is that each player demands some cost for its services being utilized and a coalition of players $S$ get a reward $r$ only if it is winning and gets the job done. The coalition also incurs a cost of $\min_{S'\subseteq S, v(S')=1}(\sum_{i\in S'}c_i)$ when it pools resources to get the job done. Based on this formulation, we can define Edge Path Coalitional Games with costs C-EPCG and Vertex Path coalitional games with costs, C-VPCG. It is easy to see that for a game C-G, if $r=1$ and the costs are all zero, then C-G is equivalent to G.

\begin{observation}
	C-ESPG is equivalent to the \emph{value shortest path game (VSPG)} in \citep{Nebel10a}. ${VPCG}^D$ is equivalent to the \emph{simple path disruption game} in \citep{BP10a}.	
	%The \emph{value shortest path game (VSPG)} in \citep{Nebel10a} is equivalent to C-ESPG. The \emph{simple path disruption game} in \citep{BP10a} is equivalent to the dual of VPCG.
\end{observation}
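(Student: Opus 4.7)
The plan is to verify both equivalences by straightforward unfolding of the definitions, so this observation is really a bookkeeping statement rather than a theorem requiring a nontrivial argument. I would handle the two claims separately, in the order they are stated.

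For the first equivalence, I would recall the VSPG formulation from \citep{Nebel10a}: the players are the edges $E$, each edge $e$ carries a cost $c_e$, and the value of a coalition $S \subseteq E$ is $r - \pi_S$ if $S$ admits an $s$-$t$ path and $0$ otherwise, where $\pi_S$ is the cost of a minimum-cost $s$-$t$ path using only edges in $S$. Starting from the EPCG on the same graph and applying the definition of the cost-based generalization, I get $v^c(S) = r - \min_{S' \subseteq S,\, v(S') = 1} \sum_{i \in S'} c_i$ whenever $S$ admits an $s$-$t$ path. The one small observation needed is that, because costs are non-negative, a cost-minimising winning sub-coalition $S'$ of $S$ may be assumed to be an $s$-$t$ path itself: any extra edges in such an $S'$ can be discarded without increasing $\sum_{i \in S'} c_i$ while preserving $v(S')=1$. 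Hence the inner minimum equals $\pi_S$, and the two characteristic functions agree.

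For the second equivalence, I would apply the dual definition directly, using $v(V) = 1$ for the VPCG: for any $S \subseteq V$,
\[
v^D(S) \;=\; v(V) - v(V \setminus S) \;=\; 1 - v(V \setminus S).
\]
Thus $v^D(S) = 1$ iff $v(V \setminus S) = 0$, i.e.\ iff the vertices outside $S$ (together with $s$ and $t$) do not support an $s$-$t$ path, i.e.\ iff $S$ is an $s$-$t$ vertex cut. This is exactly the winning condition defining the simple path disruption game of \citep{BP10a}, and since both are monotone $\{0,1\}$-games on $V$ with identical winning coalitions, they coincide as coalitional games.

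The main obstacle, such as it is, lies entirely in confirming that the formulations in \citep{Nebel10a} and \citep{BP10a} match the ones I have sketched above, in particular that in VSPG the shortest path is restricted to edges of $S$ and the reward $r$ plays the same role. Beyond that, no argument beyond definition-chasing and the non-negativity remark that lets us prune non-path edges is required.
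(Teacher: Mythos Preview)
Your proposal is correct and proceeds exactly as one would expect: by unfolding the definitions and, for the first equivalence, noting that non-negative costs let one prune any winning sub-coalition down to a single $s$-$t$ path without increasing its cost. The paper itself gives no proof of this observation at all---it is stated without justification---so your write-up is strictly more detailed than the original, but in substance it is the same definition-chasing argument that the authors evidently regarded as immediate.
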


We now define the following two \emph{path intercept games}.

\begin{definition}[Path intercept games]
	For an unweighted directed graph, $G=(V\cup\{s,t\},E)$, 
	the corresponding \emph{Edge Path Noncooperative Game (EPNG)} is a noncooperative game with two players, the \emph{interceptor} and the \emph{passer}. The pure strategies of the interceptor are the edges $E$ and the pure strategies of the \emph{passer} is set $\mathcal{P}$ which contains all paths from vertex $s$ to vertex $t$. If the edge used by the interceptor intersects with the chosen path, then the interceptor wins and gets payoff $1$. Otherwise it loses and gets payoff $0$. 
	
	\emph{Vertex Path Noncooperative Games (VPNGs)} have an analogous definition to EPNGs except that the pure strategies of the interceptor are the vertices $V$ and that if the vertex used by the interceptor intersects with the chosen path, then the interceptor wins and gets payoff $1$.

\end{definition}

Both EPCG and VPNG can be generalized to the case with detection probabilities where the probability that the passer moving through edge $e$ (or vertex $v$) will be detected if the interceptor inspects $e$ (or $v$) is $p_e$ (or $p_v$ respectively).

\subsection{Cooperative Solutions}
%A solution for a cooperative game consists of a distribution of the values of the coalitions that form. In this paper, we assume that the grand coalition consisting of all players always forms. Accordingly, 

A cooperative game solution consists of a distribution of the value of the grand coalition over the players. Formally speaking, a solution associates with each cooperative game~$(N,v)$ a set of \emph{payoff vectors $(x_1,\ldots,x_n)\in\mathbb R^N$} such that $\sum_{i\in N}x_i=v(N)$, where $x_i$ denotes player~$i$'s share of~$v(N)$. Such efficient payoff vectors are also called \emph{preimputations}.
As such, solution concepts formalize the notions of fair and stable payoff vectors. In what follows, we use notation similar to that of \citet{EGGW09a}.

Given a cooperative game~$(N,v)$ and payoff vector $x=(x_1,...,x_n)$, the \emph{excess
 of a coalition $S$ with respect to $x$} is defined by \[e(x,S)=x(S)-v(S)\text,
\] 
where $x(S)=\sum_{i\in S}x_i$.
We are now in a position to define one of the most fundamental solution concepts of cooperative game theory, viz., the core.

\begin{definition}[Core]
A payoff vector $x=(x_1,\ldots,x_n)$ is in the \emph{core} of a cooperative game~$(N,v)$ if and only  for all $S\subseteq N$, $e(x,S) \geq 0$.
\end{definition}
A core payoff vector guarantees that each coalition gets at least what it could gain on its own. The core is a desirable solution concept, but, unfortunately it is empty for many games. Games which have a non-empty core are called \emph{balanced}. The possibility of the core being empty led to the development of the  \emph{$\epsilon$-core}~\citep{ShSh66a} and the \emph{least core}~\cite{MPS79a}.

%We denote the distinct values in the excess vector by $-\epsilon_1(x,v), -\epsilon_2(x,v),\ldots, -\epsilon_m(x,v)$, where $-\epsilon_i(x,v)<-\epsilon_j(x,v)$ for $i<j$. 

\index{$\epsilon$-core}

\begin{definition}[Least core]
For $\epsilon>0$, a payoff vector $x$ is in the \emph{$\epsilon$-core} if for all $S\subseteq N$, $e(x,S)\geq-\epsilon$. The payoff vector~$x$ is in the \emph{least core} if it is in the $\epsilon$-core for the smallest~$\epsilon$ for which the $\epsilon$-core is non-empty. We will denote by $-\epsilon_1(v)$, the minimum excess of any least core payoff vector of $(N,v)$.
\end{definition}

It is easy to see from the definition of the least core, that it is the solution of the following linear program (LP):

\begin{equation}
\label{LC-LP}
\begin{array}{ll}
\min & \epsilon  \\
\text{s.t.} & x(S)\geq v(S)-\epsilon\ \ \text{for all}\
S\subseteq N,\,\\
&\epsilon\geq 0, x_i\geq 0\ \text{for all}\
i\in N ,\\
& \sum_{i=1,\ldots,n} x_i = v(N)\ .\\
%& x \ge 0\ .\\
\end{array}
\end{equation}

The nucleolus is a special payoff vector which is in the core if the core is non-empty and is otherwise a member of the least core. The \emph{excess vector} of a payoff vector $x$, is the vector $(e(x,S_1),...,e(x,S_{2^n}))$ where $e(x,S_1)\leq e(x,S_2)\leq \ldots \leq e(x,S_{2^n})$.

\begin{definition}[Nucleolus]
A payoff vector $x$ such that $x_i\geq v(\{i\})$ for all $i\in N$ and $x$ has lexicographically the largest excess vector is called the \emph{nucleolus}. %(or prenucleolus in case the core is empty). 
\end{definition}

The nucleolus is unique and always exists as long as $v(S)=0$ for all singleton coalitions~\citep{Sc69a}.

\eat{
\begin{definition}[Banzhaf value]
A player $i$ is \emph{critical} in a coalition $S$ when $S \in W$ and $S \setminus \{i\} \notin W$. 
For each $i \in N$, we denote the number of \emph{swings} or the number of coalitions in which $i$ is critical in game $v$ by the \emph{Banzhaf value} ${{\eta}_{i}}(v)$. 
\end{definition}

Intuitively, the Banzhaf value is the number of coalitions in which a player plays a critical role and the Shapley-Shubik index is the proportion of permutations for which a player is \emph{pivotal}. For a permutation $\pi$ of $N$, the $\pi(i)$th player is pivotal if coalition $\{\pi(1),\ldots, \pi(i-1)\}$ is losing but coalition $\{\pi(1),\ldots, \pi(i)\}$ is winning.

\begin{definition}[Shapley value]
%The \emph{Shapley-Shubik value} is the function $\kappa$ that assigns to any simple game $(N,v)$ and any voter $i$ a value $\kappa_{i}(v)$ where 
%$\kappa_{i}(v) = \sum_{X \subseteq N} (|X|-1)!(n-|X|)!(v(X)- v(X-\{i\})).$
The \emph{Shapley value} of $i$ is the function $\varphi$ defined by 
$\varphi_i(v)=\frac{\sum_{X \subseteq N} (|X|-1)!(n-|X|)!(v(X)- v(X-\{i\}))}{n!}.$
\end{definition}
}

%\subsection{Noncooperative solutions}

%\section{Relation between games}

% Let $\Delta(A)$ be the set of mixed strategies 
%(probability distributions) on a finite set $A$. We assume the reader is familiar with the concept of Nash equilibria, and the fact that in two-player zero-sum noncooperative games, the maxmin strategies are equivalent to the Nash strategies. If such games are represented in normal-form, then it is well-known that a linear program formulation can help solve the game in polynomial time. However, this may not be the case for other representations for which the size of the linear program is exponential in the size of the input.

\section{Least core of path coalitional game variants}

Before considering other computational issues, we notice that the value of a coalition in EPCGs and VPCGs can be computed in polynomial time. For a coalition $S$ in a EPCG/C-VPCG, use \emph{Depth First Search} to check whether $s$ and $t$ are connected in a graph restricted to $S$. If not, then $v(S)=0$. Otherwise, $v(S)$ is equal to $1$.
%$r$ minus length of the shortest $s$-$t$ edge/vertex path of $G^c$.

Our first observation is that in all games EPCG, VPCG, $EPCG^D$ and $VPCG^D$, the core can be empty. In fact, the following proposition characterizes when the core of these games is non-empty:

\begin{proposition}
	The core of
	\begin{itemize}
		\item EPCG is non-empty if and only if there exists an edge, the removal of which disconnects $s$ and $t$.
		\item VPCG is non-empty if and only if there exists a vertex, the removal of which disconnects $s$ and $t$.
		\item $EPCG^D$ is non-empty if and only if there exists an $(s,t)$ edge.
		\item (\citet{BP10a}) $VPCG^D$ is non-empty if and only if there exists a vertex $x$ such that $(s,x)$ and $(x,t)$ are edges in the graph.
	\end{itemize}
\end{proposition}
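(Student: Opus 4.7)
The plan is to reduce all four parts to a single elementary fact about simple games: a simple game $(N,v)$ with $v(N)=1$ has non-empty core if and only if it admits a \emph{veto player}, i.e., a player $i\in N$ such that $N\setminus\{i\}$ is losing. If such an $i$ exists, the payoff vector placing all the mass on $i$ lies in the core, since every winning coalition contains $i$. Conversely, if no veto player exists, then for each $i$ the set $N\setminus\{i\}$ is winning, so any core payoff $x$ satisfies $x_i = x(N) - x(N\setminus\{i\}) \le 1-1 = 0$; combined with $x_i \ge v(\{i\}) \ge 0$, this forces $x\equiv 0$, contradicting $\sum_i x_i = 1$.

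With this reduction in hand, each bullet amounts to identifying the veto players of the corresponding game by unfolding the defining condition, using the duality $v^D(S) = v(N) - v(N\setminus S)$ where applicable. For EPCG, the coalition $E\setminus\{e\}$ is losing exactly when deleting $e$ disconnects $s$ from $t$; for VPCG, $V\setminus\{u\}$ is losing exactly when $u$ is an $s$-$t$ cut vertex. For the duals, the identity $v^D(N\setminus\{i\}) = v(N) - v(\{i\})$ turns ``$N\setminus\{i\}$ is losing in $G^D$'' into ``$v(\{i\}) = 1$ in $G$''. In $EPCG^D$ this says that the edge $e$ by itself forms an $s$-$t$ path, i.e., $e$ is the direct edge $(s,t)$; in $VPCG^D$ it says that the single internal vertex $u$ enables the path $s$-$u$-$t$, i.e., both $(s,u)$ and $(u,t)$ are edges.

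The only real obstacle is formulating the veto-player lemma cleanly; once that is in place, the four identifications follow by routine unfolding of the definition of ``losing'' together with the duality. I silently assume $v(N)=1$ throughout (so that $s$ and $t$ are connected in the underlying graph), since otherwise all four games reduce to the zero game and their core trivially contains the all-zero payoff.
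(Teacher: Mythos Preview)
Your proposal is correct and follows essentially the same route as the paper: both reduce to the veto-player characterization of non-empty cores in simple games and then unfold the duality $v^D(N\setminus\{i\})=v(N)-v(\{i\})$ to handle the dual games. The only differences are cosmetic---you supply a self-contained proof of the veto-player lemma (the paper cites it from \cite{EGGW09a}) and you state the duality step as a two-sided equivalence, whereas the paper spells out only the direction ``$v(\{i\})=1 \Rightarrow i$ is a vetoer in the dual.''
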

\begin{proof}
	All cases follow directly from the fact that in a simple monotone game, the core is non-empty if and only if there exists a vetoer, i.e., a player $i\in N$ such that $v(N\setminus \{i\})=0$~\citep[see e.g., ][]{EGGW09a}. For the dual games, note the following. 
	Let $(N,v^d)$ be the dual game and let $x$ be a player such that $v(x)=1$. We want to show that player $x$ is a vetoer in $(N,v^d)$ i.e., $v^d(N\setminus x)=0$. 
	We know that $v(N)=1$. Then, by definition of dual, $v^d(N\setminus x)=v(N)-v(x)=1-1=0$. Thus x is a vetoer in $(N,v^d)$. 
	%For the $EPCG^D$ and $VPCG^D$, we know that a player $x$ is a vetoer if and only if $v()$
	
\end{proof}

Since the core can be empty, the least core payoff assumes more importance. 
We will first present a general positive result (Theorem~\ref{th:main}) regarding the computation of least core payoff for cost-based generalizations of simple games. 
For a simple game $G=(N,v)$ and $(x_1, \ldots, x_{N})\in {\mathbb{R}_{+}}^{|N|}$, denote by $G^x$ game $G$ in which each player $i\in N$ has weight $x_i$. Then, a \emph{minimum weight winning coalition} of $G^x$ a winning coalition $S$ such that $x(S)$ is minimal.

\begin{theorem}\label{th:main}
For a simple game $G=(N,v)$, assume that there exists an algorithm which for a given 
weight vector $(x_1, \ldots, x_{N})\in {\mathbb{R}_{+}}^{|N|}$, computes in polynomial time a minimum weight winning coalition of $G^x$. Then, a least core payoff of a cost-based generalization of $G$ can be computed and verified in polynomial time.
\end{theorem}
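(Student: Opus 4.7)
The plan is to run the ellipsoid method on LP~\eqref{LC-LP} specialised to $v^c$, with the hypothesised minimum weight winning coalition algorithm serving as a polynomial-time separation oracle. Since the LP has only $|N|+1$ variables, a polynomial-time separation oracle suffices for polynomial-time optimisation. First I would reduce the constraint set. For losing $S$ the constraint $x(S)\geq v^c(S)-\epsilon$ is trivially satisfied. For winning $S$, let $S^*\subseteq S$ attain $v^c(S)=r-c(S^*)$, writing $c(T)=\sum_{i\in T}c_i$; the constraint $x(S)\geq r-c(S^*)-\epsilon$ is then implied, via $x\geq 0$, by the corresponding constraint for $S^*$ itself, whose minimum-cost winning subset must still have cost $c(S^*)$ (any cheaper winning subset of $S^*$ would contradict the choice of $S^*$ inside $S$). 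Hence LP~\eqref{LC-LP} for $v^c$ has the same feasible $(x,\epsilon)$ as the reduced LP
\begin{equation*}
\begin{array}{ll}
\min & \epsilon\\
\text{s.t.} & x(T)+c(T)\geq r-\epsilon \quad \text{for every winning } T\subseteq N,\\
            & \epsilon\geq 0,\ x_i\geq 0 \text{ for all } i\in N,\ \sum_{i\in N}x_i=v^c(N).
\end{array}
\end{equation*}

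Next, for a candidate $(\hat x,\hat\epsilon)$ a violated constraint of this reduced LP exists iff some winning $T$ satisfies $\hat x(T)+c(T)<r-\hat\epsilon$. Setting player weights $w_i=\hat x_i+c_i$, this is precisely the question whether $G^{w}$ admits a winning coalition of weight strictly below $r-\hat\epsilon$, which the assumed oracle decides and witnesses in polynomial time. The right-hand-side constant $v^c(N)=r-\min_{T \text{ winning}}c(T)$ is itself produced by one oracle call at weights $c$. Feeding this separation subroutine into the ellipsoid method therefore yields, in polynomial time, both the value $\epsilon_1(v^c)$ and a least core payoff.

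Verification of a claimed least core payoff $\hat x$ reduces to the same machinery: (i)~check $\hat x\geq 0$ and $\hat x(N)=v^c(N)$; (ii)~compute $\epsilon_1(v^c)$ by the procedure above; and (iii)~invoke the oracle once at weights $\hat x + c$ to confirm that no winning coalition has weight strictly below $r-\epsilon_1(v^c)$. The crux of the plan is the reduction of separation to a single minimum weight winning coalition query, which hinges only on the monotonicity afforded by $x\geq 0$ and the non-negativity of the costs. The principal obstacle to watch for is the usual bit-complexity requirement of the ellipsoid method, which is discharged provided the reward $r$ and the costs $c_i$ are rational with polynomially bounded encoding—the natural assumption in this setting.
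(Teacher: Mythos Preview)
Your proof is correct and follows essentially the same approach as the paper: solve LP~\eqref{LC-LP} via the ellipsoid method, using the hypothesised algorithm as a separation oracle applied to the shifted weights $w=x+c$, after first computing $v^c(N)$ with a single call at weights $c$. Your explicit reduction of the LP to the constraints $x(T)+c(T)\geq r-\epsilon$ over winning $T$ is a somewhat cleaner packaging of what the paper establishes by a direct case analysis that the minimum-weight winning coalition under $x+c$ attains the minimum excess, but the underlying idea is identical.
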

\begin{proof}

	We will denote the algorithm in the statement as Algorithm $A$. 
	Consider C-G$=(N,v^c)$ be the cost-based generalization of $(N,v)$ with associates cost vector $c=(c_1,\ldots, c_{|N|})\in {\mathbb{R}_{+}}^{|N|}$ and reward $r\in \mathbb{R}_{+}$.

In order to compute a least core payoff of C-G, we consider the least core LP for C-G. 
The size of the linear program~\eqref{LC-LP} is exponential in the 
	size of game C-G, with an inequality for every subset
	of players. However, this linear program can be solved using the 
	ellipsoid method and a separation oracle, 
	which verifies in polynomial time 
	whether a solution is feasible or returns a violated
	constraint~\cite{Sch04a}. We now demonstrate how algorithm $A$ can be used to construct the separation oracle for the least core LP of C-G.

A candidate solution for the least core payoff is an efficient payoff $x=(x_1,\ldots, x_{|N|})$ such that $x(N)=v^c(N)=r-\min_{S'\subseteq N, v(S')=1}(\sum_{i\in S'}c_i)=r-$ weight of the minimum weight winning coalition of $G^c$. Since $c=(c_1,\ldots, c_{|N|})\in {\mathbb{R}_{+}}^{|N|}$, Algorithm $A$ can be used to compute $\min_{S'\subseteq S, v(S')=1}(\sum_{i\in S'}c_i)$ and therefore $x(N)$. Now that $x(N)$ is known, a separation oracle for the least core LP considers different candidate solutions $x$ such that $x(N)$ is constant.

For a candidate solution $x=(x_1,\ldots, x_{|N|})$, where $x(N)=v(N)=r-$ minimum cost of a winning coalition in C-G, construct the weighted function $x'=(x_1',\ldots, x_{|N|}')$ such that $x_i'=x_i+c_i$ for all $i\in N$. Since $x_i\geq 0$ and $c_i\geq 0$ for all $i\in N$, $x_i'\geq 0$ for all $i\in N$. Therefore, we can use algorithm $A$ to compute a minimum weight winning coalition $S^*$ of $G^{x'}$. 

Now the claim is that $S^*$ is a coalition with the minimum excess of C-G with respect to payoff $x$ and that $x'(S^*)-r$ is the minimum of excess of game C-G with respect to payoff $x$. Note that $e(x,S^*)=x(S^*)-v^c(S^*)\leq x(N)-v^c(S^*)=x(N)-(r- \min_{S''\subseteq S^*, v(S'')=1}c(S''))\leq x(N)-(r- \min_{S''\subseteq N, v(S'')=1}c(S''))= x(N)-v^c(N)= 0.$ 

For the sake of contradiction, assume that there is a coalition $S'$ which has the minimum excess with respect to $x$ in game C-G such that $S'$ is not a minimum weight winning coalition of $G^{x'}$. Then, either $S'$ is not winning or is winning but not a minimum weight winning coalition. If $S'$ is losing, then $e(x,S')=x(S')-v^c(S')=x(S')\geq 0$. Since $e(x,S^*)\leq 0$, $S'$ does not have smaller excess that $S^*$ in game C-G with respect to payoff $x$.

In the second case, assume that $v^c(S')=1$ but $x'(S')>x'(S^*)$. Without loss of generality, $S'$ is a minimal winning coalition. If it were not, then we prove that there exists an $S''\subset S'$ such that $S''$ is a minimal winning coalition and 
$e(x,S'')\leq e(x,S')$. If $v(S'')=v(S')$, then we are already done as $x(S'')\leq x(S)'$. Assume that $v(S')<v(S'')$. Then, there exists a minimal winning coalition $S'''\subset S'$ such that $c(S''')<c(S'')$. But then it must be that $x(S''')+c(S''')\geq x(S'')+c(S'')$ because if it were not, then $e(x,S''')\leq e(x,S'')$. Thus, we have established that $S'$ is a minimal winning coalition without loss of generality. Since $x(S')+c(S')=x'(S')> x'(S^*)=x(S^*)+c(S^*)$, therefore $e(x,S')=x(S')-(r-c(S'))=x(S')+c(S')-r=x'(S')-r>x'(S^*)-r=e(x,S^*)$. 

We can use the known algorithm $A$ to compute the winning coalition $S^*$ with the smallest total weight $x'(S^*)$. 
If we have $x'(S^*)=x(S^*)+c(S^*)-r\geq -\epsilon$, then $x(S)-v^c(S)\geq -\epsilon$ for all $S\subseteq N$. Therefore, $x$ is feasible. Otherwise, the constraint $x(S^*)-v^c(S^*)\geq -\epsilon$ is violated. This completes our argument that a polynomial-time separation oracle for the least core LP of the C-G can be constructed.
	
 A payoff $x=(x_1,\ldots, x_{|N|})$ can be verified if it is in the $\epsilon$-core by using the separation oracle. Since the minimum excess $-\epsilon_1$ of the least core payoff can be computed, therefore the separation oracle can also be used directly to check if the given payoff is in the least core.
\end{proof}

\begin{corollary}\label{cor:scg}
A least core can be computed for cost-based generalizations of the following games: \emph{spanning connectivity games}~\citep{ALPS09a} and  \emph{weighted voting games~\citep{EGGW09a} with bounded weights and also bounded costs}.
\end{corollary}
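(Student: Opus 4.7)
The plan is to invoke Theorem~\ref{th:main} directly: for each of the two classes, it suffices to exhibit a polynomial-time algorithm $A$ which, given a nonnegative weight vector on the players, returns a winning coalition of minimum total weight. Once this has been supplied, Theorem~\ref{th:main} yields the least core payoff computation and verification procedures for the cost-based generalization.

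For \emph{spanning connectivity games}, the players are the edges of an underlying graph $G$ and a coalition $S\subseteq E$ is winning iff $S$ contains a spanning connected subgraph, equivalently a spanning tree. Given any nonnegative weight vector $x'$, a minimum weight winning coalition is then simply a minimum weight spanning tree of $G$ under $x'$, which is computable in polynomial time by Kruskal's or Prim's algorithm (and if $G$ is disconnected, no winning coalition exists at all, which Theorem~\ref{th:main}'s machinery handles trivially).

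For \emph{weighted voting games with bounded weights and bounded costs}, a coalition $S$ is winning iff $\sum_{i\in S} w_i \geq q$, where $w_i$ and $q$ are integers bounded polynomially in $|N|$. Algorithm $A$ thus needs to solve a covering knapsack: given nonnegative real weights $x'_1,\ldots,x'_{|N|}$, find $S$ with $\sum_{i\in S} w_i \geq q$ minimizing $\sum_{i\in S} x'_i$. The plan is a standard dynamic program indexed by the attained voting-weight sum: let $f(i,j)$ be the minimum of $\sum_{k\in T} x'_k$ over $T\subseteq\{1,\ldots,i\}$ with $\sum_{k\in T} w_k = j$, with the recurrence $f(i,j)=\min\{f(i-1,j),\,f(i-1,j-w_i)+x'_i\}$ for $0\leq j\leq \sum_k w_k$; the answer is $\min_{j\geq q} f(|N|,j)$. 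Since $\sum_k w_k$ is polynomial, the table has polynomial size and is filled in polynomial time, with the $x'_i$ stored as reals (or rationals of polynomial bit-length as inherited from the LP and bounded $c$).

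The main delicate point is the knapsack step: one must be sure that polynomially bounded $w_i$'s (and hence a polynomial threshold $q$) suffice, while the weights $x'_i=x_i+c_i$ fed into $A$ by the separation oracle of Theorem~\ref{th:main} are reals whose bit complexity stays polynomial. The former is exactly the bounded-weights hypothesis, and the latter follows from the bounded-costs hypothesis together with the standard guarantee that an ellipsoid-solvable LP over polynomial-size input yields iterates of polynomial bit-length, so that $A$'s pseudo-polynomial running time is truly polynomial throughout the run of the separation oracle.
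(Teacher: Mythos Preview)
Your proposal is correct and follows the same route as the paper: both invoke Theorem~\ref{th:main} by exhibiting, for each game class, a polynomial-time minimum-weight-winning-coalition oracle. The paper merely cites the needed oracles from the literature (\citep[Proposition~5]{ABH10a} for spanning connectivity games and \citep[Theorem~5]{EGGW09a} for unary weighted voting games), whereas you spell them out explicitly as an MST computation and a knapsack DP.

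One small remark on the weighted-voting part: your DP is indexed by the integer voting-weight sum $j\le\sum_k w_k$ and merely \emph{stores} the real values $x'_i$, so it is already polynomial once the $w_i$ are bounded; your appeal to the bounded-costs hypothesis for bit-length control is thus not actually needed for your own argument to go through (the ellipsoid iterates have polynomial bit length whenever the LP data do, independently of any unary assumption on $c$). The paper's proof motivates the bounded-costs clause differently, by observing that the oracle it cites from \citep{EGGW09a} is only stated for unary inputs, so the combined weights $x'_i=x_i+c_i$ fed to it must stay unary.
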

\begin{proof}
	For a spanning connectivity game $G$, there exists an algorithm which for a given 
	weight vector $(x_1, \ldots, x_{N})\in {\mathbb{R}_{+}}^{|N|}$, computes in polynomial time a minimum weight winning coalition of $G^x$~\citep[Proposition 5, ][]{ABH10a}. 
	
	For weighted voting games with weights represented in unary, there exists an algorithm which for a given 
	weight vector $(x_1, \ldots, x_{N})\in {\mathbb{R}_{+}}^{|N|}$, computes in polynomial time a minimum weight winning coalition of $G^x$~\citep[Theorem 5, ][]{EGGW09a}. Since the algorithm works only for weighted voting games with small weights, the algorithm can be used as a separation oracle for the least core of cost-based generalization of weighted voting games only if the associated cost vector is also represented in unary.	
\end{proof}

From the proof of Theorem~\ref{th:main} and Corollary~\ref{cor:scg} it is evident that if there the separation oracle to compute the least core LP of a simple game $G$ can also be used as a separation oracle to compute the least core LP of a cost-based generalization of $G$. We say that the representation of a coalitional game $(N,v)$ is \emph{as compact} as the cost function $c=(c_1,\ldots, c_{N})\in {\mathbb{R}_{+}}^{|N|}$, if the following condition holds: if cardinal values used in the representations of $(N,v)$ are in unary, then $c$ is also represented in unary. 

\begin{observation}\label{th:main2}
	Let $G=(N,v)$ be the underlying simple game and C-G$=(N,v^c)$ be the cost-based generalization of $(N,v)$. Assume that the representation of $(N,v)$ is as compact as the cost function $c$. Then, if there exists a polynomial-time separation oracle for the least core LP of the underlying simple game, then a least core payoff of the cost-based generalization can be computed and verified in polynomial time.
\end{observation}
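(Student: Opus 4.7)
The plan is to mimic the proof of Theorem~\ref{th:main}, replacing its minimum-weight-winning-coalition subroutine by the assumed polynomial-time separation oracle for the least core LP of the underlying simple game $G$. As before, I would solve the least core LP of C-G by the ellipsoid method, so the task reduces to two subproblems: evaluating the grand-coalition value $v^c(N)$ for the efficiency constraint, and separating the family of constraints $x(S)\geq v^c(S)-\epsilon$.

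For the first subproblem, observe that a separation oracle for $G$'s least core LP is, by definition, a routine that on input $(y,\epsilon')$ either certifies $y(S)\geq v(S)-\epsilon'$ for every $S\subseteq N$ or returns a violating coalition $S$. Querying it at $(c,\,1-W)$ decides whether there is a winning $S$ with $c(S)<W$, so a binary search on $W$ pins down $\min_{T\subseteq N,\,v(T)=1} c(T)$, and hence $v^c(N)=r-\min_T c(T)$. The `as compact as' hypothesis ensures that $c$ has polynomial bit-length in the size of the representation of $(N,v)$, so this binary search terminates in polynomially many steps.

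For the second subproblem, I would replay the key reduction in Theorem~\ref{th:main}. Given a candidate $(x,\epsilon)$ with $x,\epsilon\geq 0$, the losing-coalition constraints hold trivially, and on winning coalitions the most violated constraint is attained on some minimal winning $S$, where $v^c(S)=r-c(S)$. Defining the shifted non-negative weights $x'_i := x_i + c_i$, a minimal winning $S$ violates the C-G constraint exactly when $x'(S)<r-\epsilon$, which is precisely the condition that $(y,\epsilon')=(x'/r,\,\epsilon/r)$ violates $G$'s least core constraint at $S$. A single call to the given oracle therefore either certifies feasibility of $(x,\epsilon)$ or returns a winning $S$; in the latter case $x(S)=x'(S)-c(S)<r-c(S)-\epsilon\leq v^c(S)-\epsilon$, so $S$ is a valid separating constraint for C-G as well.

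The main obstacle is a bit-complexity accounting: the outer ellipsoid method runs in polynomial time only if each invocation of the constructed oracle is polynomial-time, which requires the shifted-and-scaled query $(x'/r,\,\epsilon/r)$ and the cost vector $c$ to have polynomially bounded bit-length relative to the input encoding of $(N,v)$. This is precisely the role of the `as compact as' hypothesis. With the two subproblems solved, the ellipsoid method returns an optimal $(x^*,\epsilon_1)$ in polynomial time, and verification that a proposed payoff lies in the least core reduces to one invocation of the constructed oracle at $\epsilon=\epsilon_1$.
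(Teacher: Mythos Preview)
Your argument is essentially the one the paper has in mind: the observation is stated without a standalone proof, with the remark that it is ``evident'' from the proof of Theorem~\ref{th:main}, and your write-up is precisely a fleshing-out of that---shift the candidate payoff by the cost vector and feed the shifted (scaled) query into the assumed oracle for the simple game. One small technical point: in your first subproblem, querying at $(c,1-W)$ only does what you claim when $1-W\ge 0$; if the minimum winning cost exceeds~$1$, the oracle may legitimately return a losing coalition such as $\emptyset$ (which violates $c(\emptyset)\ge -(1-W)$). This is harmless once you rescale $c$ so that $c(N)\le 1$ before the binary search, exactly as you rescale by $r$ in the second subproblem.
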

 
We now apply Theorem~\ref{th:main} to path coalitional games.

\begin{theorem}\label{th:escg-costs-lc}
There exist polynomial-time algorithms to compute and verify least core payoffs of cost-based generalizations of Edge Path Coalitional Games (C-EPCGs) and Edge Path Coalitional Games (C-EPCGs) for both directed and undirected graphs.
\end{theorem}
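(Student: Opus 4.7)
The plan is to apply Theorem~\ref{th:main} directly. By that theorem, it suffices to exhibit, for each of the underlying simple games EPCG and VPCG, a polynomial-time algorithm that, given any non-negative weight vector $x = (x_1, \ldots, x_{|N|}) \in {\mathbb{R}_{+}}^{|N|}$ on the players, returns a minimum weight winning coalition of $G^x$. Once such an algorithm is in hand, Theorem~\ref{th:main} provides the polynomial-time separation oracle for the least core LP of the corresponding cost-based generalization, and hence yields both the computation and the verification of least core payoffs.

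For C-EPCG, a winning coalition is a set of edges containing an $s$-$t$ path, and by monotonicity a minimum weight winning coalition is just a cheapest such path (any extra edges can be dropped without losing the winning property). So I would run Dijkstra's algorithm on the graph $G$ with each edge $e$ weighted by $x_e \geq 0$, returning an $s$-$t$ shortest path as the minimum weight winning coalition. This works uniformly for directed and undirected graphs, and runs in polynomial time because all weights are non-negative.

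For C-VPCG, a winning coalition is a set of vertices through which an $s$-$t$ path can be routed; again by monotonicity the cheapest one is the vertex set of a single $s$-$t$ path. Here I would use the standard vertex-splitting reduction: replace each vertex $v \in V$ by two copies $v^{\mathrm{in}}$ and $v^{\mathrm{out}}$ joined by an edge of weight $x_v$, and replace every original edge $(u,v)$ by an edge from $u^{\mathrm{out}}$ to $v^{\mathrm{in}}$ of weight $0$ (with $s$ and $t$ handled as sources/sinks not charged). A minimum-weight $s$-$t$ path in the split graph corresponds exactly to an $s$-$t$ path in $G$ with minimum total vertex weight, so Dijkstra again does the job in polynomial time for both directed and undirected inputs.

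No step here is really an obstacle; the conceptual work has already been done in Theorem~\ref{th:main}. The only thing to be slightly careful about is bookkeeping in the VPCG case to ensure that the source and sink vertices (which are not players) are not charged, and that the reduction preserves non-negativity of the edge weights so that Dijkstra applies. Plugging these two shortest-path routines into Theorem~\ref{th:main} as the required Algorithm~$A$ then immediately yields the claimed polynomial-time algorithms for computing and verifying least core payoffs of C-EPCG and C-VPCG on both directed and undirected graphs.
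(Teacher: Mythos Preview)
Your proposal is correct and follows essentially the same approach as the paper: both invoke Theorem~\ref{th:main} and supply, as Algorithm~$A$, Dijkstra's shortest-path algorithm for the edge case and the standard vertex-splitting reduction (then Dijkstra) for the vertex case. Your write-up is in fact slightly more careful, explicitly noting why a minimum weight winning coalition is a single path and flagging the bookkeeping for $s$ and $t$.
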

\begin{proof}
	
We use Theorem~\ref{th:main} to prove the statement.\\

\noindent	
C-EPCGs: For a C-EPCG $G$, it is sufficient to show that for a weight vector,  $x=(x_1,\ldots, x_{|E|})$, we can compute a minimum weight winning coalition of $G^x$. Each player (edge) $i$ has a weight $x_i$ and the minimum weight winning coalition is an $s$-$t$ simple path $P$ with the smallest weight, that is the shortest $s$-$t$ path. Use \emph{Dijkstra's Shortest Path Algorithm} to compute the shortest path $P$ from $s$ to $t$ in graph $G^{x}$ and then the minimum weight winning coalition is $E(P)$, the edges used in path $P$.\\

\noindent 
C-VPCGs: For a C-VPCG $G$, it is sufficient to show that for a weight vector,  $x=(x_1,\ldots, x_{|V|})$, we can compute a minimum weight winning coalition of $G^x$. Each player (node) $i$ has a weight $x_i$ and the minimum excess coalition is an $s$-$t$ simple path $P$ with the smallest weight, that is the shortest vertex $s$-$t$ path. 
%A shortest vertex $s$-$t$ path can be computed in polynomial time~\citep[see e.g., ][]{SYZ07a}.
Then compute the shortest vertex weighted path $P$ from $s$ to $t$ in graph $G^{x}$ and then the minimum weight winning coalition is $V(P)$. Dijkstra's Shortest Path Algorithm can be used to compute the shortest vertex weighted path as follows. 
	The problem can be reduced to the classic shortest path problem in the following way: 
	duplicate each vertex (apart from $s$ and $t$) with one getting all ingoing edges, and the other getting all the outgoing edges, add an internal edge between them with the node weight as the edge weight. Use the algorithm to compute the shortest vertex path $P$ from $s$ to $t$ in graph $G^{x}$.
	\end{proof}

A least core payoff of a coalitional game is not necessarily a least core payoff of the dual game. 
Therefore, we require new algorithms to compute the least core of dual coalitional path games.

\begin{theorem}\label{th:dual-epcg-costs-lc}
%The least core payoffs of C-$EPCG^D$ can be computed and verified in polynomial time for both directed and undirected %graphs.
There exist polynomial-time algorithms to compute and verify least core payoffs of C-$EPCG^D$s and C-$VPCG^D$s  for both directed and undirected graphs.
\end{theorem}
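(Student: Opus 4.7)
The plan is to invoke Theorem~\ref{th:main} and reduce the problem, for each of the two dual games, to computing a minimum-weight winning coalition of the underlying simple game $EPCG^D$ or $VPCG^D$ with respect to an arbitrary non-negative weight vector. Once such a combinatorial algorithm is in hand, Theorem~\ref{th:main} immediately yields polynomial-time computation and verification of least-core payoffs of the cost-based generalization, and does so uniformly for directed and undirected graphs (since max-flow/min-cut works in both settings).

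First I would characterize the winning coalitions of the duals. By definition, $v^D(S) = 1$ iff $v(N\setminus S) = 0$, so a coalition is winning in $EPCG^D$ iff its edges form an $s$-$t$ edge cut, and winning in $VPCG^D$ iff its vertices (drawn from $V$, the non-terminals) form an $s$-$t$ vertex cut. Given a non-negative weight vector $x$, the minimum-weight winning coalition of $(EPCG^D)^x$ is therefore precisely a minimum $s$-$t$ edge cut with capacities $x_e$, which is computed in polynomial time by any standard max-flow algorithm. Similarly, the minimum-weight winning coalition of $(VPCG^D)^x$ is a minimum vertex $s$-$t$ cut with node capacities $x_v$, which is computed in polynomial time via the classical node-splitting reduction: replace each $v \in V$ by two copies $v^{\mathrm{in}}, v^{\mathrm{out}}$ connected by an internal edge of capacity $x_v$, redirect incoming edges to $v^{\mathrm{in}}$ and outgoing edges from $v^{\mathrm{out}}$, and set the capacities of all original edges to $+\infty$; a min $s$-$t$ cut in this auxiliary graph corresponds exactly to a min-weight vertex cut in the original. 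Undirected edges are handled by replacing each with two oppositely directed edges, as usual.

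Applying Theorem~\ref{th:main} to these two algorithms yields polynomial-time computation and verification of least-core payoffs for C-$EPCG^D$ and C-$VPCG^D$ on arbitrary (directed or undirected) graphs.

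The main obstacle, as far as it goes, is not algorithmic but conceptual: one has to notice that the winning coalitions of the dual path game are exactly the $s$-$t$ cuts, which is what makes a well-studied network-flow subroutine available as the separation oracle demanded by Theorem~\ref{th:main}. The subtlety in the vertex case is ensuring that the node-splitting reduction gives a cut among the non-terminal vertices only (no cut edge incident to $s$ or $t$), which is guaranteed by assigning capacity $+\infty$ to all original edges so that only the internal edges corresponding to players in $N=V$ can appear in an optimal cut.
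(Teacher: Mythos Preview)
Your proposal is correct and follows essentially the same approach as the paper: invoke Theorem~\ref{th:main}, observe that the winning coalitions of the dual games are exactly the $s$-$t$ edge (resp.\ vertex) cuts, and compute a minimum-weight one via max-flow (resp.\ the standard node-splitting reduction to edge min-cut with infinite capacities on original edges). The paper's proof is virtually identical, so there is nothing to add.
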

\begin{proof}
	
	We utilize Theorem~\ref{th:main} to prove the statement.\\
	
\noindent 	
C-$EPCG^D$: For a C-$EPCG^D$ $G$, it is sufficient to show that for a weight vector,  $x=(x_1,\ldots, x_{|E|})$, we can compute a minimum weight winning coalition of $G^x$. Each player (edge) $i$ has a weight $x_i$ and the minimum weight winning coalition is an $s$-$t$ cut $P$ with the smallest weight. Use the maximum network flow algorithm~\citep[Chapter 27, ]{CLRS01a} to compute the minimum weight edge $s$-$t$ cut $C$ in graph $G^x$. This gives us the minimal winning coalition $C$ with the minimum weight.\\

\noindent 
C-$VPCG^D$: For a C-$VPCG^D$ $G$, it is sufficient to show that for a weight vector,  $x=(x_1,\ldots, x_{|V|})$, we can compute a minimum weight winning coalition of of $G^x$. Each player (node) $i$ has a weight $x_i$ and the minimum weight winning coalition is a minimum weight $s$-$t$ vertex cut. Then compute the minimum weight $s$-$t$ vertex cut in graph $G^{x}$ and then the minimum weight winning coalition is $V(P)$. It is known that 
the minimum weight vertex $s$-$t$ cut can be computed in polynomial time for directed graphs by standard network-flow methods. The network flow method to compute the minimum edge $s$-$t$ cut can be used to compute the minimum vertex $s$-$t$ cut as following. 
The problem can be reduced to the problem of min weight $s$-$t$ edge cut of an edge weight directed graph in the following way: 
duplicate each vertex (apart from $s$ and $t$) with one getting all incoming edges, and the other getting all the outgoing edges, add an internal edge between them with the node weight as the edge weight. Set the weight of all original edges as infinite (sufficiently large). We use existing algorithms to compute the minimum weight vertex $s$-$t$ cut to construct the separation oracle for the 
C-${VPCG}^D$ least core LP.\end{proof}

We note that if instead of using $s$-$t$ connectivity settings, we consider more than two terminals then some problems  such as {\sc In-$\epsilon$-Core} become NP-hard. This follows from the fact that computing a min cut for more than two terminals is NP-hard.

\eat{
A natural solution when the core is non-empty is to increase the reward to the grand coalition to ensure that all players work together and the grand coalition is stable. An external agent who increases the reward would rather increase the reward minimally to ensure that the core of the resultant game is non-empty. This concept is termed as the \emph{cost of stability} and was introduced in \citep{BMZRR09a}. From our results it follows that the cost of stability can also be computed in polynomial time.

\begin{theorem}\label{th:cos-easy}
The cost of stability of C-EPCG, C-VPCG, C-${EPCG}^D$ and C-${VPCG}^D$ can be computed in polynomial time.
\end{theorem}
\begin{proof}
	We already saw that separation oracles for the least core LPs for EPCG, VPCG, ${EPCG}^D$ and ${VPCG}^D$ can be solved in polynomial time. Since all these algorithms rely on the construction of a polynomial-time separation oracle, by Observation~1~\citep{ABH10a}, the cost of stability of all these games can be computed in polynomial time.\end{proof}

}

\section{A closer look at path coalitional games without costs}

In this section, we take a closer look at simple path coalitional games without costs. 
We will refer to the minimum size of an $s$-$t$ cut of a unweighted graph as $c_E$ if we refer to edge cuts and as $c_V$ if we refer to vertex cuts. Then we have the following theorem:

\begin{theorem}[Characterization of path coalitional games without costs]\label{th:lc-charac}
Consider an EPCG $G_{EPCG}$ and a VPCG $G_{{VPCG}}$. Then $\epsilon_1(G_{EPCG})=1-1/c_E$ and $\epsilon_1(G_{{VPCG}})=1-1/c_V$. Moreover, there are combinatorial polynomial-time algorithms to compute and verify a least core payoff of EPCGs and VPCGs.
\end{theorem}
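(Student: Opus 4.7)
The plan is to solve the least-core LP directly, using the fact that the only non-trivial constraints come from minimal winning coalitions, which for EPCGs (resp.\ VPCGs) are exactly the simple $s$-$t$ paths viewed as edge-sets (resp.\ internal-vertex-sets). Fixing the efficiency constraint $x(N)=1$ and using $x\geq 0$, the LP reduces to: maximise $\delta$ subject to $x(P)\geq \delta$ for every $s$-$t$ path $P$, where $\delta=1-\epsilon$. So the goal becomes to show that the optimum is $\delta^*=1/c_E$ (resp.\ $\delta^*=1/c_V$).

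For the lower bound on $\epsilon_1$ (upper bound on $\delta^*$), I would invoke Menger's theorem: in an undirected/directed graph there exist $c_E$ pairwise edge-disjoint $s$-$t$ paths $P_1,\ldots,P_{c_E}$. Any feasible $x$ then satisfies
\[
c_E\cdot\min_i x(P_i)\;\leq\;\sum_{i=1}^{c_E} x(P_i)\;\leq\; x(E)\;=\;1,
\]
so some path has $x$-weight at most $1/c_E$, forcing $\delta^*\leq 1/c_E$, i.e.\ $\epsilon_1\geq 1-1/c_E$. For the matching upper bound I would exhibit an explicit payoff: compute a minimum $s$-$t$ edge cut $C^*$ (with $|C^*|=c_E$) by any standard max-flow algorithm, and set $x_e=1/c_E$ for $e\in C^*$ and $x_e=0$ otherwise. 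Then $x(E)=1$, $x\geq 0$, and every $s$-$t$ path crosses $C^*$ at least once, giving $x(P)\geq 1/c_E$ for every path and hence for every winning coalition. This shows $\epsilon_1\leq 1-1/c_E$ and supplies an explicit least-core payoff. Computing $C^*$ and checking that an input payoff $x$ satisfies $x(E)=1$, $x\geq 0$, and $\min_P x(P)\geq 1/c_E$ (a single shortest-path computation in $G$ weighted by $x$) are both polynomial-time combinatorial procedures, so computation and verification follow.

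The VPCG case is entirely parallel: minimal winning coalitions are the internal vertices of $s$-$t$ paths, the vertex version of Menger's theorem supplies $c_V$ internally vertex-disjoint $s$-$t$ paths (yielding the same averaging lower bound on $\epsilon_1$), and uniformly distributing mass $1/c_V$ over a minimum vertex $s$-$t$ cut (computable by the standard node-splitting reduction to edge max-flow, as used in the proof of Theorem~\ref{th:dual-epcg-costs-lc}) produces a least-core payoff. Verification uses a vertex-weighted shortest $s$-$t$ path in place of the edge-weighted one.

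The main obstacle is not any single step but making sure the two Menger-style inequalities are applied to exactly the right family of coalitions on each side: on the lower-bound side we need edge-disjointness (resp.\ internal-vertex-disjointness) so that the sum telescopes into $x(E)\leq 1$ (resp.\ $x(V)\leq 1$); on the upper-bound side we need every winning coalition, not merely every minimal one, to contain at least one element of the chosen min-cut, which follows immediately from monotonicity plus the defining cut property. Handling directed versus undirected graphs and the degenerate case $c_E=0$ (when $s$ and $t$ are disconnected and $v(N)=0$) are minor, and will be dispatched with a one-line remark.
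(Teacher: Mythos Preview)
Your argument is correct and arrives at the same explicit least-core payoff (uniform mass on a minimum $s$-$t$ cut) and the same verification procedure (a single shortest-path computation) as the paper. The route, however, is genuinely different. The paper does not prove $\epsilon_1=1-1/c_E$ directly; instead it identifies the least-core LP with the maxmin LP of the two-player zero-sum path-intercept game (EPNG with all detection probabilities equal to~$1$) and then invokes the result of Washburn and Wood (1995) that the interceptor's optimal strategy is the uniform distribution on a minimum cut with value $1/c_E$. Your proof is self-contained: you bound $\delta^*$ from above via Menger's theorem and an averaging argument over $c_E$ edge-disjoint (resp.\ internally vertex-disjoint) paths, and from below by the explicit cut payoff. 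What the paper's approach buys is the conceptual link to network-interdiction games, which it later exploits to motivate the nucleolus as an equilibrium refinement; what your approach buys is that nothing external is cited, and the proof would survive in a setting where the Washburn--Wood reference is unavailable. Your handling of the reduction to minimal winning coalitions (via monotonicity and $x\geq 0$) is also made more explicit than in the paper, which helps readability.
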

\begin{proof}
Consider an EPNG based on graph $G$ with detection probabilities $(p_1,\ldots, p_{|E|})$. 
Let $\Delta(A)$ denote set of mixed strategies 
(probability distributions) on a finite set $A$.
The equilibrium or {\em maxmin} strategies of the interceptor
are
the solutions
$\{ x\in \Delta(E)\ 
|\ \sum_{v\in P} x_e\cdot p_e\ge val(G)\ \text{for all}\ P\in \mathcal{P}
\}$ 
to the following linear program,
which has the optimal value $val(G)$.
\begin{equation}
\label{e-LPmaxmin}
\begin{array}{ll}
\max & \alpha \\
\text{s.t.} & \sum_{e\in P} x_e\cdot p_e \geq \alpha\ 
\text{for all}\ P\in \mathcal{P}\ ,\\
& x \in \Delta(E)\ .\\
\end{array}
\end{equation}

We notice that if $p_e=1$ for $e\in E$, then LP~\eqref{e-LPmaxmin} is equivalent to the least core LP for EPCGs. It is clear that maxmin strategy $x$ of EPNG where $p_e=1$ for all $e\in E$ is equivalent to the least core payoff of EPCG corresponding to $G$. 
%Similarly, a maxmin strategy $x$ of VPNG is equivalent to the least core payoff (payoff which maximizes ) of VPCG corresponding to $G$. 

LP~\eqref{e-LPmaxmin} is equivalent to LP 1 in \citep{WashWood95a} if $p_e$ is set to $1$ for each edge in both LPs. This demonstrates that computing maxmin strategies of EPNG is equivalent to computing least core payoffs of EPCGs.
\citet{WashWood95a} conclude that maxmin strategy is obtained by constructing a graph $G^{c'}$ where $c'_e=1/p_e$ and then computing the minimum weight $s$-$t$ cut $S$. Each edge $e\in S$ is then given interdiction probability proportional to $c_e=1/p_e$. 
It follows that if $p_e=1$ for all $e\in E$, then $c'_e=1/p_e$, and the minimum weight $s$-$t$ cut $S$ of $G^{c'}$ is simply the min cardinality $s$-$t$ cut $S$ of $G$. 
A maxmin strategy $x$ of the interceptor, each edge in $S$ is inspected with probability $1/|S|=1/c_E$. 
Therefore for EPCG corresponding to $G$, the payoff of each simple $s$-$t$ path or equivalently minimum winning coalition has payoff $1/c_E$ and the minimum excess $-\epsilon_1$ of the EPCG is $1/c_E-1$.

We note that a similar analysis holds for VPCGs.
\end{proof}

Theorem~\ref{th:lc-charac} helps to give a correspondence between EPCG and EPNG and also between VPCG and VPNG. We note that there is no such correspondence between, for example EPNG with detection probabilities and C-EPCG. 
Theorem~\ref{th:lc-charac} helps us formulate combinatorial algorithms to compute the least core of EPCGs and VPCGs (without costs). 
The problem of computing a least core payoff reduces to computing a minimum cardinality edge cut (or vertex cut) of the graph and uniformly distributing the probability over the minimum cut. \emph{Such least core payoffs are the extreme points of the least core convex polytope and in fact any other least core payoff is a convex combination of the extreme points.}

\eat{
\begin{table}[t]
%\tiny
%\small
%\scriptsize
\footnotesize
\centering
\begin{tabular}{lll}
\toprule
Graph&EPCG/VPCG&EPNG/VPNG\\ \midrule

E/V&Players&pure strategy of interceptor\\
$s$-$t$ paths&MWCs&pure strategies of passer\\
graph weights&payoff&interceptor's strategy\\ 
$x$&LC payoff&maxmin strategy\\

\bottomrule
\end{tabular}
\caption{Path coalitional games \& Path intercept games}
\label{SCG-wiretap}
\end{table}
}

Our demonstrated connection of EPNGs to the corresponding coalitional EPCG in the proof of Theorem~\ref{th:lc-charac} helps examine refinements of the maxmin strategies such as the nucleolus. %We will discuss this connection in further detail in this section.

%\newpage
%\section{Nucleolus of path games}

The nucleolus of a coalitional game is the unique and arguably the fairest solution concept which is guaranteed to lie in the core if the core is non-empty. The interpretation of the nucleolus in the non-cooperative setting is the maxmin strategy of the passer which not only minimizes the number of pure best responses of the interceptor but also maximizes the potential extra payoff if the interceptor does not choose the optimal strategy. 
%We observe that the nucleolus strategy is a refinement of the proper equilibrium strategy. 
Computing the nucleolus is a notoriously hard problem and only a handful of non-trivial coalitional games are known for which the nucleolus can be computed efficiently~\citep[see \eg][]{ALPS09a}.

We will show that for certain graph classes like series-parallel graphs, the nucleolus strategy can be computed in polynomial time (Theorem~\ref{th:epcg-nucleolus-sp}). Series-parallel graphs are an especially useful class of graphs because they are present in many settings such as electrical networks, urban grid lay-outs etc.

%We seek to find the complexity of computing the nucleolus in the path coalitional games in the general as well as restricted graphs.

\begin{definition}[Series-parallel graph]
Let $G=(V,E)$ be a graph with source $s$ and sink $t$. Then $G$ is a \emph{series-parallel graph} if it may be reduced to $K_2$(a two vertex clique) by a sequence of the following operations:
\begin{enumerate}
\item replacement of a pair of parallel edges by a single edge that connects their common endpoints;
\item replacement of a pair of edges incident to a vertex of degree 2 other than $s$ or $t$ by a single edge so that $2$ degree vertices get removed.
\end{enumerate}
\end{definition}

%We note that series-parallel graphs.% are a superclass of trees and outerplanar graphs.
%http://wwwteo.informatik.uni-rostock.de/isgci/classes/gc_275.html

Denote the set of edge min cuts of a graph $G$ by $\mathcal{C}(G)$. Denote by $C_e(G)$ the set $\{S\in \mathcal{C}(G) \mid e'\in S \}$.

\begin{lemma}\label{lemma:epcg}
For an undirected series-parallel graph $G=(V\cup\{s,t\},E)$, let $x$ be a least core payoff of the corresponding EPCG and let $e\in E$ be such that $C_e(G)=\emptyset$. Then $x_e=0$.
\end{lemma}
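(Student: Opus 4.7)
The plan is to derive the conclusion from Menger's theorem and the characterization of least core payoffs provided by Theorem~\ref{th:lc-charac}. That theorem gives $\epsilon_1(G_{EPCG}) = 1 - 1/c_E$, so any least core payoff $x$ is a nonnegative vector on $E$ with $x(E)=1$ satisfying $x(P) \geq 1/c_E$ for every $s$-$t$ path $P$. (The inequalities arising from losing coalitions are automatically satisfied since $x\geq 0$, so only the path constraints are active.)

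The core of the argument is to produce $c_E$ pairwise edge-disjoint $s$-$t$ paths $P_1,\ldots,P_{c_E}$ none of which uses $e$. This is where the hypothesis $C_e(G)=\emptyset$ enters. I consider $G' = G - e$. An $s$-$t$ edge cut of $G'$ is exactly an $s$-$t$ edge cut of $G$ that avoids $e$; since no minimum cut of $G$ contains $e$, the min-cut value of $G'$ is still $c_E$. By Menger's theorem applied to $G'$, there exist $c_E$ pairwise edge-disjoint $s$-$t$ paths in $G'$, giving the required system of paths avoiding $e$.

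A double-counting step finishes the proof. Summing the path constraints gives $\sum_{i=1}^{c_E} x(P_i) \geq c_E \cdot (1/c_E) = 1$. Because the $P_i$ are edge-disjoint subsets of $E\setminus\{e\}$, we also have $\sum_i x(P_i) = \sum_{f \in \bigcup_i P_i} x_f \leq x(E) - x_e = 1 - x_e$. Combining the two estimates yields $x_e \leq 0$, and together with $x_e \geq 0$ we conclude $x_e = 0$.

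I do not foresee a genuine obstacle: the only nontrivial step is the invocation of Menger's theorem on $G-e$, together with the observation that deleting an edge outside every minimum cut preserves the min-cut value. In fact this argument nowhere uses series-parallel structure, so the conclusion holds for arbitrary graphs; the series-parallel hypothesis is presumably retained only because subsequent lemmas building toward the nucleolus algorithm of Theorem~\ref{th:epcg-nucleolus-sp} will rely on sharper structural properties of $\mathcal{C}(G)$ that do depend on series-parallelism.
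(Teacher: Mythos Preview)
Your proof is correct, and it takes a genuinely different route from the paper's. The paper argues by contradiction using the series-parallel structure: it places $e$ inside the decomposition tree (components $G_1,G_2$ in series with $e$ and $G_3,G_4$ in parallel with it), observes that every minimum cut of $G$ must lie entirely in $G_1$ or $G_2$, and then constructs an explicit payoff transfer from $e$ (and possibly nearby edges) onto a minimum cut $C$ that strictly increases the minimum excess, contradicting least-core optimality. Your argument instead combines the characterization $\epsilon_1=1-1/c_E$ from Theorem~\ref{th:lc-charac} with Menger's theorem applied to $G-e$ and a double-counting of $x$ over $c_E$ edge-disjoint paths avoiding $e$; this is shorter, avoids any case analysis, and---as you observe---never uses series-parallelism, so it actually proves the lemma for arbitrary graphs. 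One small point worth tightening: your sentence ``an $s$-$t$ edge cut of $G'$ is exactly an $s$-$t$ edge cut of $G$ that avoids $e$'' is not literally true (a cut of $G-e$ need not cut $G$), but the conclusion you draw is still correct, since any cut $C'$ of $G-e$ with $|C'|<c_E$ would make $C'\cup\{e\}$ a minimum cut of $G$ containing $e$, contradicting $C_e(G)=\emptyset$. The paper's transfer argument, by contrast, is tailored to the recursive structure and foreshadows the inductive machinery used in Theorem~\ref{th:epcg-nucleolus-sp}.
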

\begin{proof}
	Let $e=(a,b)\in E$ be such that $C_e(G)=\emptyset$ and assume for contradiction that there is a least core payoff of EPCG for $G$ such that $x_e>0$.
Let the graph component in series with and left of $e$ be $G_1$, the graph component in series with and right of $e$ be $G_2$, the graph component in parallel and above $e$ be $G_3$ and the graph component in parallel below $e$ be $G_4$. Since $C_e(G)=\emptyset$ there exists no edge $e'\in G_3 \cup G_4$ such that $C_e'(G)>0$. Now assume that the mincut value of $G$ is $c^*$. The mincut $C$ with size $c^*$ must either be in $G_1$ or $G_2$. We also know that since $x$ is a least core payoff, the length of the shortest $s$-$t$ path in $G$ is $1/c^*$. We show that if $x_e>0$, then a transfer of payoff from certain edge in $G_3\cup G_4 \cup \{e\}$ increases the minimum excess, thereby showing that $x$ is not a least core payoff.

If there exists no shortest $a$-$b$ path which includes $e$, then we know that $e$ is present in no coalition which gets the minimum excess. Therefore $e$ can donate its payoff uniformly to $C$ and increase the minimum excess by $x_e/c^*$. Now assume that $e$ is in one of the shortest $a$-$b$ paths. Clearly, this is not the only simple $a$-$b$ paths because if this were the case then $e$ would be a bridge be one of the mincuts. We know that mincut value of $G_3\cup G_4 \cup \{e\}$ is more than $c^*$. Let $S$ be the minimum cut of $G_3\cup G_4 \cup \{e\}$. We know that $|S|>|C|=c^*$. Then, we can show that the minimum excess of $x$ increases if $x(S)$ is distributed uniformly over $C$. Each shortest $s$-$t$ path if $G^x$ has to pass one edge in $C$ and one edge in $S$. The the weight of each edge in $C$ has increased by $x(S)/|S|$ and the length of the shortest $a$-$b$ has decreased by $x(S)/|S|$, the excess increases exactly by positive value $x(s)/|C|-x(S)/|S|$ without decreasing any other excesses.
\end{proof}

\begin{theorem}\label{th:epcg-nucleolus-sp}
The nucleolus of EPCGs for undirected series-parallel graphs can be computed in polynomial time.
\end{theorem}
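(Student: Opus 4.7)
The plan is to compute the nucleolus via the Maschler--Peleg--Shapley (MPS) iterative scheme~\cite{MPS79a}, implementing each iteration in polynomial time by exploiting the series-parallel decomposition of $G$. The first iteration is already handled by Theorem~\ref{th:lc-charac}, which gives $\epsilon_1 = 1 - 1/c_E$ and characterises the least core as the convex hull of the uniform distributions over the minimum $s$-$t$ cuts; Lemma~\ref{lemma:epcg} additionally fixes the payoff of every edge lying in no minimum cut to $0$, so such edges need not be considered in later iterations.

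At each subsequent iteration $k \geq 2$, I would solve, via the ellipsoid method, the LP that minimises $\epsilon_k$ subject to the equality constraints $x(S) - v(S) = -\epsilon_j$ inherited from the earlier iterations $j < k$ at which $S$ was fixed together with $x(S) - v(S) \geq -\epsilon_k$ for every coalition $S$ not yet fixed. The separation oracle must return a most-violated free-coalition constraint, which reduces to computing a minimum weight winning coalition that is not already fixed; in the absence of the ``not already fixed'' clause, this is simply a shortest $s$-$t$ path in $G^x$ computable by Dijkstra's algorithm, exactly as in the proof of Theorem~\ref{th:escg-costs-lc}. The standard dimension-reduction argument of the MPS scheme bounds the number of iterations by $|E|$, so a polynomial-time separation oracle per iteration suffices for a polynomial-time algorithm overall.

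The main obstacle is the separation oracle at $k \geq 2$: the equality constraints inherited from earlier iterations couple edge payoffs non-trivially, and a shortest path returned by Dijkstra may coincide with an already-fixed coalition and so fail to produce a violated constraint. To handle this I would maintain the SP-decomposition tree of $G$ throughout the algorithm. After iteration~$1$, the fixed paths are exactly those that meet every minimum cut in exactly one edge, and the parallel/series composition rules for minimum cuts (parallel: sum; series: take the minimum) and for $s$-$t$ paths (parallel: one side; series: concatenation) let one enumerate such paths recursively along the SP tree. The equalities they induce decompose the LP of the next iteration along the SP tree into sub-LPs on smaller SP sub-graphs, on each of which the separation oracle is again a shortest-path computation on a suitably reweighted sub-graph. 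Iterating this procedure until the feasible set contracts to a single point yields the nucleolus in polynomial time.
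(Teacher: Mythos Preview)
Your approach differs substantially from the paper's and contains a genuine gap at precisely the step you yourself flag as ``the main obstacle.''

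The paper does not use the MPS scheme at all. Instead it computes the nucleolus directly by structural recursion on the series--parallel decomposition tree, invoking covariance and anonymity of the nucleolus together with Lemma~\ref{lemma:epcg}. For a single edge the nucleolus assigns payoff~$1$; if $G$ is the series composition of $G'$ and $G''$ with min-cut sizes $c'\le c''$ and nucleoli $x',x''$, then either $c'<c''$ and the nucleolus of $G$ is $(x',0)$, or $c'=c''$ and it is $(\alpha x',(1-\alpha)x'')$ with $\alpha=m''/(m'+m'')$, where $m',m''$ are the smallest nonzero entries of $x',x''$; if $G$ is the parallel composition then the nucleolus is $(\alpha x',(1-\alpha)x'')$ with $\alpha=c'/(c'+c'')$. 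These closed-form rules yield a linear-time algorithm once the SP tree is known, with no LPs or separation oracles after the initial min-cut computation.

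Your proposal, by contrast, hinges entirely on the assertion that for $k\ge 2$ the equality constraints inherited from earlier iterations ``decompose the LP of the next iteration along the SP tree into sub-LPs on smaller SP sub-graphs, on each of which the separation oracle is again a shortest-path computation.'' This is stated, not proved, and it is far from obvious. You correctly identify the level-$1$ tight paths as those meeting every minimum cut in exactly one edge, but you give no argument that (i) the tight coalitions at later levels admit an analogous structural description, (ii) the induced equality constraints factor along the SP tree rather than coupling edges across different branches, or (iii) finding a minimum-excess \emph{non-fixed} coalition remains a shortest-path problem after such a factorisation. Without these pieces the separation oracle is undefined for $k\ge 2$, and the polynomial bound on the number of iterations buys nothing. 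The paper's recursive formulas sidestep this difficulty entirely by never needing to solve the intermediate LPs.
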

\begin{proof}

We show that the problem of computing the nucleolus of EPCGs of undirected series-parallel graphs reduces to computing the parallel-series decomposition of the graph. 
There are known standard algorithms to identify and decompose series-parallel graphs~\citep[see e.g., ]{HeYe87a}. 
The reduction is based on an inductive argument in which if we know the nucleolus of two graphs $G'$ and $G''$, then we can also compute in polynomial time the nucleolus of the graph made by connecting $G'$ and $G''$ in series or parallel. 
The proof by induction is as follows:

\paragraph{Base case:} The base case is trivial. In any graph $G$ with a single edge $e$ connecting $s$ and $t$, the (pre)nucleolus $x$ gives payoff $1$ to $e$.

\paragraph{Induction:} Our induction involves two cases: attaching two graph components in series and parallel. Consider two series-parallel undirected graphs $G'$ and $G''$ and assume we already know that their nucleoli are $x'$ and $x''$ respectively. We will show that computing the nucleolus of $G$ formed by joining $G'$ and $G''$ in series and parallel is polynomial-time easy.

\begin{enumerate}
	\item
Assume we attach $G'$ and $G''$ in series to obtain $G$. Let the size of any edge mincut be $c'$ and any edge mincut be $c''$. If $c'< c''$, then by Lemma~\ref{lemma:epcg}, there is no advantage of giving payoff to any edges in $G''$. 
Therefore, the nucleolus of $G'$ is equal to the nucleolus of $G$ and we are done. 
Assume that $c'=c''$. 
We recall that the nucleolus satisfies \emph{anonymity} and \emph{covariance}~\citep{PeSu03a,Sn95a}. 
Then due to Lemma~\ref{lemma:epcg} and covariance and anonymity property of the nucleolus, we have $x=(\alpha{x'},(1-\alpha)x'')$ where $0<\alpha<1$. 
Let $m'$ and $m''$ be the smallest non-zero payoff of a player in $x'$ and $x''$ respectively. 
We then show that $x=(\alpha{x'},(1-\alpha)x'')$ is the nucleolus if $\alpha$ has the unique value for which $m'(\alpha)=m''(1-\alpha)$, i.e., $\alpha=m''/(m'+m'')$. If this were true, then   $x=(m''/(m'+m''){x'},(1-m''/(m'+m''))x'')$. In this case, the minimum excess for $x$ is $1/c'-1$ and the number of coalitions achieving this in $G$ is $|A|\times 2^{|B|}$ where $A$ is the set number of simple paths in $G^x$ and $B=\{e\in E(G)\mid C_e(G)=\emptyset\}$. We also know that the value of the second minimum excess is $1/c'-1+ (m'\cdot m'')/(m'+m'')$. 
Now assume that there exists another payoff $y=(\alpha{x'},(1-\alpha)x'')$ for some $\alpha \neq m''/(m'+m'')$ such that $y$ has a lexicographically greater excess vector than $x$. Clearly $y$ is a least core payoff of $G$. 
Then the minimum excess for $x$ is $1/c'-1$ and the number of coalition achieving this in $G$ is still $|A|\times 2^{|B|}$. 
However the second minimum excess for $y$ is less than $1/c'-1+ (m'\cdot m'')/(m'+m'')$. Therefore, $y$ has a smaller lexicographical excess vector than $x$ which is a contradiction.

 \item Consider two series-parallel undirected graphs $G'$ and $G''$ and assume we attach them in parallel to obtain $G$. Let the size of any edge mincut of $G'$ be $c'$ and the size of any edge mincut of $G''$ be $c''$. Both the mincut values can be computed in polynomial time for a graph. 
We know that the size of mincut of $G$ is $c'+c''$. 
Then due to Lemma~\ref{lemma:epcg} and covariance and anonymity properties of the nucleolus, we know that $x=(\alpha{x'},(1-\alpha)x'')$ where $0<\alpha<1$. We then show that $x=(\alpha{x'},(1-\alpha)x'')$ is the nucleolus if alpha has the unique value $c'/(c'+c'')$. Since the size of a mincut of $G$ is $c'+c''$, every least core payoff $y$ of is such that $G^y$ has shortest path $1/(c'+c'')$. We want that every shortest $s$-$t$ path which passes from $G$ to have length $1/(c'+c'')$. This is only possible if $\alpha=c'/(c'+c'')$.
\end{enumerate}
\end{proof}

We conjecture that a similar approach may help construct a polynomial-time algorithm to compute the nucleolus of VPCGs for series-parallel graphs.

\eat{
\begin{observation}\label{tree-path-games}
The Banzhaf indices, Banzhaf values and Shapley-Shubik indices and nucleolus of EPCG, $EPCG^D$, VPCG and $VPCG^D$ can be computed trivially. We note that in a tree, there is a unique $s$-$t$ path $p$. All the edges/vertices in $p$ are equi-important and the rest of the edges/vertices are dummies. Therefore the payoff is distributed uniformly over the non-dummy players.
\end{observation}
}

\eat{
\section{Power indices of coalitional path games}

Power indices are popular solution concepts of cooperative games. Even in the noncooperative settings where instead of one interceptor, there are multiple interceptors, power indices measure which nodes and edges play a more pivotal role in connecting $s$-$t$ and therefore suggest in what proportion, safety levels must be established on those nodes and edges. Computing Banzhaf values and Shapley values is hard in most settings. \citet{BP10a} proved that computing Banzhaf values of ${VPCG}^D$ is \#P-complete. The result was obtained by a reduction from a counting version of {\sc NAE-SAT}. Similarly, \citet{Nebel10a} proved that computing the Banzhaf values and Shapley values of EPCGs is \#P-complete by a reduction from {\sc $s$-$t$ Node Connectedness}~\citep{Val79a}. 
We show that computing the Banzhaf values and Shapley values of all variants is \#P-complete. 

\begin{definition}[Banzhaf value]
A player $i$ is \emph{critical} in a coalition $S$ when $S \in W$ and $S \setminus \{i\} \notin W$. 
For each $i \in N$, we denote the number of \emph{swings} or the number of coalitions in which $i$ is critical in game $v$ by the \emph{Banzhaf value} ${{\eta}_{i}}(v)$. 
\end{definition}

Intuitively, the Banzhaf value is the number of coalitions in which a player plays a critical role and the Shapley-Shubik index is the proportion of permutations for which a player is \emph{pivotal}. For a permutation $\pi$ of $N$, the $\pi(i)$th player is pivotal if coalition $\{\pi(1),\ldots, \pi(i-1)\}$ is losing but coalition $\{\pi(1),\ldots, \pi(i)\}$ is winning.

\begin{definition}[Shapley value]
%The \emph{Shapley-Shubik value} is the function $\kappa$ that assigns to any simple game $(N,v)$ and any voter $i$ a value $\kappa_{i}(v)$ where 
%$\kappa_{i}(v) = \sum_{X \subseteq N} (|X|-1)!(n-|X|)!(v(X)- v(X-\{i\})).$
The \emph{Shapley value} of $i$ is the function $\varphi$ defined by 
$\varphi_i(v)=\frac{\sum_{X \subseteq N} (|X|-1)!(n-|X|)!(v(X)- v(X-\{i\}))}{n!}.$
\end{definition}

\begin{proposition}\label{th:powerindices-edge-pathgames}
	Computing the Shapley values and Banzhaf values of the VPCG, ${VPCG}^D$, and ${EPCG}^D$ is \#P-complete.
\end{proposition}
\begin{proof}
	
	\citet{BP10a} proved that computing the Banzhaf value of ${VPCG}^D$ is \#P-complete. 
	%We observed that Simple Path Disruption Game is equivalent to the Dual Vertex Path Coalitional Game. For the sake of completeness and simplicity, we first present a brief and simplified proof that computing Banzhaf values for ${VPCG}^D$s is \#P-complete. \citet{Val79a} proved that {\sc $s$-$t$ Node Connectedness}, the problem counting $z$ the number of subsets of vertices whose removal leaves a (directed) path from $s$ to $t$ is \#P-complete. Since the number of winning coalitions of the corresponding EPCG is $2^{|E|}-z$, therefore, counting the number of winning coalitions of a ${VPCG}^D$ is \#P-complete. We use this fact to prove that computing the Banzhaf value of ${VPCG}^D$s is \#P-complete. Consider a ${VPCG}^D$ $(N_G,v_G)$ represented by the graph $G$ and specified vertices $s$ and $t$. Now change graph $G$ to the graph $G'=(N',E')$ where $N'=N\cup \{x\}$ and $E'=E\cup \{(s,x), (x,t)\}$. In that case, we know that for all $S\subset N'$ such that $S$ is a winning coalition of the ${VPCG}^D$, $x\in S$. Therefore $\eta_x(G')$ is equal to the number of winning coalitions of $(N_G,v_G)$. Since counting the number of winning coalitions of $(N_G,v_G)$ is \#P-complete, this implies that computing the Banzhaf values of players in the ${VPCG}^D$ is \#P-complete.	
	
	We now prove that computing the Shapley values of the ${VPCG}^D$ $(N,v)$ is \#P-complete.
A representation of a simple game is considered \emph{reasonable} if, for a simple game $(N,v)$, the new game $(N\cup\{x\},v')$ where $v(S)=1$ if and only if $v'(S\cup\{x\})=1$, can also be represented with only a polynomial blowup. \citet{AzizThesis09} proved in Theorem 3.29 that for a simple game with a reasonable representation, if computing the Banzhaf values is \#P-complete, then computing the Shapley values is \#P-complete. It is known that computing the Banzhaf value of the ${VPCG}^D$ is \#P-complete. We use the theorem of \citep{AzizThesis09} to prove that computing the Shapley values of the ${VPCG}^D$ is \#P-complete. Since computing the Banzhaf value of the ${VPCG}^D$ is \#P-complete, it is sufficient to prove that ${VPCG}^D$ has a reasonable representation. For a graph $G=(V,E)$ construct a graph $G'=(V',E')$ such that $V'=V\cup\{x\}$ and $E'=E\cup \{(s,x), (x,t)\}$. Then, we know that $v(S)=1$ if and only if $v'(S\cup\{x\})=1$.

We prove that computing the Shapley values and Banzhaf values of the VPCG $(N,v)$ is \#P-complete. It is known that for a simple game $(N,v)$ if the Banzhaf value of player $i$ is $\eta_i(N,v)$ then the Banzhaf value of $i$ in the dual of $(N,v)$ is $\eta_i(N,v^d)=\eta_i(N,v)$ (Theorem 5, \citep{DuSh79a}). Therefore, it immediately follows that computing the Banzhaf values of the VPCG is \#P-complete. Similarly, \citet{Fun94a} showed in Lemma~2.7 that for a cooperative game $(N,v)$ and player $i$, $\varphi_i((N,v))=\varphi_i((N,v^d))$. Therefore, computing the Shapley value of the VPCG is \#P-complete. 

We know that the Shapley value and Banzhaf value of a player is invariant in the dual of the simple game\citep{DuSh79a,Fun94a}. 
Since computing the Banzhaf values and Shapley values of EPCG is \#P-complete, it follows that computing the corresponding values for the $EPCG^D$ is also \#P-complete.
\end{proof}

We note that EPCG, VPCG, ${EPCG}^D$ and ${VPCG}^D$ can be generalized to games in which instead of being concerned about the connectivity of two terminals, there are more than two terminals. Therefore, our hardness results for the two terminal case are stronger than the hardness results for more elaborate models (with possibly multiple terminals). A natural question is to identify classes of graphs for which power indices can be computed efficiently. We have some good news in this regard:

\begin{theorem}\label{th:sp-epcg-easy}
Banzhaf values of the Edge Path and Dual Edge Path Coalitional Game can be computed in polynomial time for series-parallel undirected graphs and acyclic directed graphs.

\end{theorem}
\begin{proof}
	Consider the problem {\sc 2-Terminal Reliability} where for a graph $G$ and specified vertices $s$ and $t$, each edge has a certain probability of being operational. The question is to compute the overall probability that $s$ and $t$ are connected. The problem {\sc 2-Terminal Reliability} is hard in general but can be computed in polynomial time for the following graph classes: series-parallel graphs;. 
	We now show that existing algorithms to solve {\sc 2-Terminal Reliability} for restricted graph classes can be used in a black-box fashion to compute Banzhaf values of the EPCG. 
	Set the probability that each edge is operational to 0.5. In that case the {\sc 2-Terminal Reliability} of $G$ is equal to the number of winning of winning coalitions $\omega(N_G,v_G)$. Consider the graph $G$ where the probability of
	edge $e$ being operational is set to $1$ whereas the
	probability of other edges being operational is set to
	$0.5$. Then the 2-terminal reliability of the graph is equal to $\omega_e(N_G,v_G)$ the number of winning coalitions which includes edge $e$. We know that $\eta_i(G)=2\omega_i(G)-\omega(G)$ where $\omega_i(G)$ denotes the number of winning coalitions in $G$ which contain player $i$ ~\citep{DuSh79a}. Therefore the Banzhaf values of EPCG and thereby the $EPCG^D$ can be computed in polynomial time.
\end{proof}

Observe that the Banzhaf indices, Banzhaf values and Shapley-Shubik indices and nucleolus of EPCG, $EPCG^D$, VPCG and $VPCG^D$ can be computed trivially. We note that in a tree, there is a unique $s$-$t$ path $p$. All the edges/vertices in $p$ are equi-important and the rest of the edges/vertices are dummies. Therefore the payoff is distributed uniformly over the non-dummy players.

}

\eat{
\begin{table}[t]
\small
\centering
\begin{tabular}{llcccccc}
\toprule
Graph&Game&LC&nucleolus\\ 

general&C-EPCG&P~(Th.~\ref{th:escg-costs-lc})&?\\
general&C-VPCG&P~(Th.~\ref{th:escg-costs-lc})&?&\\
general&C-${EPCG}^D$&P~(Th.~\ref{th:dual-epcg-costs-lc})&?\\
general&C-${VPCG}^D$&P~(Th.~\ref{th:dual-epcg-costs-lc})&?\\
\midrule
series-parallel&EPCG&P~(Th.~\ref{th:escg-costs-lc})&P~(Th.~\ref{th:epcg-nucleolus-sp})\\

\bottomrule
\end{tabular}
\caption{Summary of results}
\label{table:path-games-summary}
\end{table}

}

\section{Conclusion}

Path coalitional games provide a simple yet rich framework to model strategic settings in the area of network security and logistics. 
In this paper we analyzed different generalizations and variants of path coalitional games and classified the computational complexity of computing different cooperative and noncooperative game solutions.\footnote{This material is based upon work supported by the Deutsche Forschungsgemeinschaft under grants BR-2312/6-1 (within the European Science Foundation's EUROCORES program LogICCC) and BR~2312/7-1. The research of Troels Bjerre S{\o}rensen was supported by EPSRC award EP/G064679/1. 
We thank Mingyu Xiao, Evangelia Pyrga, and Markus Brill for useful comments.}
%One key conclusion is the following insight: \emph{under very weak conditions, linear programming techniques to compute a least core payoff of the underlying simple game can be used to compute the least core payoffs of the cost-based generalization of the simple game.} 
One key result was a general method to compute least core payoffs of cost-based generalizations of simple games.
Many of our positive results are based on separation oracles and linear programs. It will be interesting to see if there are purely combinatorial algorithms for the same problems. Apart from the EPCGs on series-parallel graphs, the complexity of computing the nucleolus is open for all other games. For all variants of path coalitional games, we assumed that each edge/vertex is owned by a separate player. It will be interesting to see if our positive results can be extended to the more general scenario where a single player may own more than one edge or vertex.

\eat{
\section*{Acknowledgements}
This material is based upon work supported by the Deutsche Forschungsgemeinschaft under grants BR-2312/6-1 (within the European Science Foundation's EUROCORES program LogICCC) and BR~2312/7-1. The research of Troels Bjerre S{\o}rensen was supported by EPSRC award EP/G064679/1. 
We thank Mingyu Xiao and Evangelia Pyrga for useful comments.
}
%\small
%\bibliography{./papers,papers}

\def\bibfont{\small}
%\bibliography{../pamas/abb,../pamas/pamas,../pamas/brandt,papers}

\end{document}